\newtheorem{Thm}{Theorem}
\newtheorem{Lem}[Thm]{Lemma}
\newtheorem{Prop}{Proposition}
\theoremstyle{definition}
\newcommand{\bra}[1]{{\left\langle #1 \right|}}
\newcommand{\ket}[1]{{\left| #1 \right\rangle}}
\newcommand{\T}{\mbox{$\mathrm{tr}$}}
\begin{document}
\title{Negativity and strong monogamy of multi-party quantum entanglement beyond qubits}

\author{Jin Hyuk Choi}
\affiliation{
 Humanitas College, Kyung Hee University, Yongin-si, Gyeonggi-do 446-701, Korea
}

\author{Jeong San Kim}
\email{freddie1@khu.ac.kr} \affiliation{
 Department of Applied Mathematics and Institute of Natural Sciences, Kyung Hee University, Yongin-si, Gyeonggi-do 446-701, Korea
}
\date{\today}

\begin{abstract}
We propose the square of convex-roof extended negativity(SCREN) as a powerful candidate to characterize
strong monogamy of multi-party quantum entanglement.
We first provide a strong monogamy inequality of multi-party entanglement using SCREN, and show
that the tangle-based multi-qubit strong monogamy inequality can be rephrased by SCREN.
We further show that SCREN strong monogamy inequality is still true for the counterexamples that violate
tangle-based strong monogamy inequality in higher-dimensional quantum systems rather than qubits.
We also analytically show that SCREN strong monogamy inequality is true for a large class of multi-qudit states,
a superposition of multi-qudit generalized W-class states and vacuums.
Thus SCREN is a good alternative to characterize the strong monogamy of entanglement even in multi-qudit systems.
\end{abstract}

\pacs{
03.67.Mn,  
03.65.Ud 
}
\maketitle

\section{Introduction}
\label{Intro}
Quantum entanglement is a quantum correlation used as a resource in various applications
of quantum information theory such as quantum teleportation and quantum cryptography~\cite{tele, qkd1, qkd2}.
One important property of entanglement is its restricted shareability in multi-party quantum systems, which does not have
any classical counterpart. This restriction of entanglement shareability among multi-party systems is
known as the {\em monogamy of entanglement}~(MoE)~\cite{CKW, OV, KW, T04, KGS}.

The first mathematical characterization of MoE was established by
Coffman-Kundu-Wootters(CKW) as an inequality~\cite{CKW}; for a three-qubit
pure state $\ket{\psi}_{ABC}$,
$$\tau\left(\ket{\psi}_{A|BC}\right) \geq \tau\left(\rho_{A|B}\right)+\tau\left(\rho_{A|C}\right),$$
where  $\tau\left(\ket{\psi}_{A|BC}\right)$ is the {\em one-tangle} of $\ket{\psi}_{ABC}$ quantifying the pure state entanglement between $A$ and $BC$,
and $\tau\left(\rho_{A|B}\right)$ (similarly with $\tau\left(\rho_{A|C}\right)$) is the {\em two-tangle}
of the reduced density matrix $\rho_{AB}=\T_{C}\ket{\psi}_{ABC}\bra{\psi}$
quantifying the two-qubit entanglement inherent in $\rho_{AB}$.

This inequality is also referred to as {\em CKW inequality}, and it shows the mutually exclusive nature of
two-qubit entanglement shared in three-qubit systems;
more entanglement shared between two qubits $A$ and $B$
leads to
less entanglement between the other two qubits $A$ and $C$
so that their summation does not exceed the total entanglement between $A$ and $BC$.
Moreover, the residual entanglement from the difference between left and right-hand sides of CKW inequality
is interpreted as the genuine three-qubit entanglement, {\em three-tangle}.

Later, CKW inequality was generalized for multi-qubit systems~\cite{OV} as well as some cases of higher-dimensional
quantum systems~\cite{KDS, KSRenyi, KT, KSU}. A general monogamy
inequality for arbitrary quantum systems was established in terms of the {\em squashed entanglement}~\cite{CW04, BCY10}.

Recently, the definition of three-tangle was generalized into arbitrary $n$-qubit systems, namely {\em $n$-tangle}
quantifying the genuine multi-qubit entanglement. By conjecturing the nonnegativity of
the $n$-tangle, the concept of {\em strong monogamy}(SM) inequality of $n$-qubit entanglement was proposed~\cite{LA}.
Although an analytical proof of SM conjecture for arbitrary multi-qubit states seems to be a formidable challenge due
to the numerous optimization processes arising in the definition of $n$-tangle, an extensive numerical evidence was presented
for four qubit systems together with an analytical proof for some cases of multi-qubit systems~\cite{LA, Kim14}.

However, tangle is known to fail in the generalization of CKW inequality for higher dimensional quantum systems
rather than qubits; there exist quantum states in $3 \otimes 3\otimes 3$ and even in $3 \otimes 2\otimes 2$ quantum systems violating
CKW inequality~\cite{OU, KS}. Because SM inequality proposed in~\cite{LA} is reduced to CKW inequality for $n=3$,
these counterexamples of CKW inequality also implies the violation of SM inequality using tangles in higher-dimensional systems rather than qubits.

Here we propose the {\em square of convex-roof extended negativity}(SCREN) as a powerful candidate to characterize
the strongly monogamous property of multi-qudit systems.
We first provide a SM inequality of multi-party entanglement using SCREN, and show that the SM inequality of multi-qubit entanglement using tangle~\cite{LA, Kim14}
can be rephrased by SCREN. This SCREN SM inequality is also true for the counterexamples of tangle in higher-dimensional systems.
Moreover, we analytically show that SCREN SM inequality is saturated by a large class of multi-qudit states, a superposition of multi-qudit generalized W-class states and vacuums.
Thus SCREN is a good alternative for strong monogamy of multi-party entanglement even in higher-dimensional systems.

The paper is organized as follows. In Sec.~\ref{sub: Negativity}, we review the definition of negativity, and provide the relation between tangle and SCREN
for multi-qubit monogamy inequality in Sec.~\ref{sub: mono qubit}. In Sec.~\ref{sub: qubitSM}, we recall the multi-qubit SM inequality in terms of tangle, and
propose a multi-qudit SM inequality using SCREN in~\ref{Subsec: SCRENmono}.
In Sec.~\ref{Subsec: gW}, we provide the definition of multi-qudit generalized W-class states as well as some useful properties of this class of states.
In Sec.~\ref{subsec: Sat}, we analytically show that the SCREN SM inequality of multi-qudit entanglement is saturated by a superposition of generalized
W-class states and vacuum. In Sec.~\ref{Sec: Conclusion}, we summarize our results.

\section{Negativity and Monogamy of Multi-Party Quantum Entanglement}
\label{Sec: N_mono}

\subsection{Negativity}
\label{sub: Negativity}

For a bipartite pure state $\ket{\phi}_{AB}$ in a $d\otimes d'$
($d\le d'$) quantum system with its Schmidt decomposition,
\begin{equation}
\ket{\phi}_{AB}~=~\sum_{i=0}^{d-1} \sqrt{\lambda_{i}}\ket{ii},~~\lambda_{i} \geq 0,~\sum_{i=0}^{d-1}\lambda_{i}~=1,
\label{Schmidt}
\end{equation}
its {\em negativity} is defined as
\begin{align}
\mathcal{N}(\ket{\phi}_{A|B})=\left\|\ket{\phi}_{AB}\bra{\phi}^{T_B}\right\|_1-1
=2\sum_{i<j}\sqrt{\lambda_{i}\lambda_{j}},
\label{eq:pure_negativity}
\end{align}
where
\begin{align}
\ket{\phi}_{AB}\bra{\phi}^{T_B} = \sum_{i,j=0}^{d-1}\sqrt{\lambda_{i}\lambda_{j}}\ket{ij}_{AB}\bra{ji}
\label{eq:pt state}
\end{align}
is the partial transposition of $\ket{\phi}_{AB}$ and
$\left\|\cdot\right\|_1$ is the trace norm~\cite{VidalW}.

Because the possible negative eigenvalues of the partially transposed state in Eq.~(\ref{eq:pt state}) are
$- \sqrt{\lambda_{i}\lambda_{j}}$ for $i < j$ with corresponding eigenstates
$\ket{\psi_{ij}}_{AB}~=~\frac{1}{\sqrt{2}}(\ket{ij}_{AB}-\ket{ji}_{AB})$, the definition of negativity in Eq.~(\ref{eq:pure_negativity})
is thus the sum of all possible negative eigenvalues with a constant proportion~\cite{Negativity}.
Eq.~(\ref{eq:pure_negativity}) can also have an alternative definition as
\begin{align}
\mathcal{N}(\ket{\phi}_{A|B})=2\sum_{i<j}\sqrt{\lambda_{i}\lambda_{j}}
=(\T{ \sqrt{\rho_A}})^2 -1,
\label{eq:pure_negativity2}
\end{align}
where $\rho_A = \T_B{\ket{\phi}_{AB}\bra{\phi}}$ is the reduced density matrix of $\ket{\phi}_{AB}$ on subsystem $A$.
For a bipartite mixed state $\rho_{AB}$, its negativity is analogously defined as
\begin{equation}
\mathcal{N}(\rho_{A|B})=\left\|{\rho_{AB}}^{T_B}\right\|_1-1,\label{eq:negativity}
\end{equation}
where $\rho_{AB}^{T_B}$ is the partial transposition of $\rho_{AB}$.

{\em Positive partial transposition}(PPT)~\cite{Peres, Horodeckis1} gives a
separability criterion for bipartite pure states and two-qubit mixed states.
PPT is also a necessary and sufficient condition for nondistillability in
$2\otimes n$ quantum system \cite{Horodecki1,DCLB}.
However, there also exist entangled mixed states with
PPT in higher-dimensional quantum systems rather than $2\otimes 2$ or
$2\otimes 3$ quantum systems.~\cite{Horodecki1,Horodeckis2}.
For this case, negativity in Eq.~(\ref{eq:negativity})
cannot distinguish PPT bound entangled states from separable
states, and thus, negativity itself is not sufficient to be a good
measure of entanglement even in a $2\otimes n$ quantum system.

One way to overcome this rack of separability criterion of negativity in higher-dimensional mixed quantum states
is using {\em convex-roof extension}~\cite{LCOK};
for a bipartite mixed state mixed state $\rho_{AB}$, its convex-roof extended negativity is
\begin{equation}
\mathcal{N}_m(\rho_{A|B})=\min_{\{p_k, \ket{\phi_k}\}} \sum_k p_k
\mathcal{N} (\ket{\phi_k}_{A|B}), \label{eq:c_negativity}
\end{equation}
where the minimum is taken over all possible pure state
decompositions of $\rho_{AB}={\sum_k p_k \ket{\phi_k}_{AB}\bra{\phi_k}}$.
Convex-roof extended negativity gives a perfect discrimination of PPT bound entangled states and separable states
in any bipartite quantum system. Moreover, it was also shown that the quantity in Eq.~(\ref{eq:c_negativity}) cannot be
increased by local quantum operations and classical communications(LOCC)~\cite{KDS, LCOK}.

\subsection{Monogamy Inequality Using Negativity}
\label{sub: mono qubit}

For a two-qubit pure state $\ket{\psi}_{AB}$~\cite{schtau}, its tangle (or one-tangle) is defined as
\begin{equation}
\tau\left(\ket{\psi}_{A|B}\right)=4\det \rho_A,
\label{1tangle}
\end{equation}
with the reduced density matrix $\rho_{A}=\T_{B}\ket{\psi}_{AB}\bra{\psi}$.
For a two-qubit mixed state $\rho_{AB}$, its tangle (or two-tangle) is defined as
\begin{equation}
\tau\left(\rho_{A|B}\right)=\bigg[\min_{\{p_h, \ket{\psi_h}\}}\sum_h p_h \sqrt{\tau(\ket{\psi_h}_{A|B})}\bigg]^2.
\label{2tangle}
\end{equation}
where the minimization is taken over all possible pure state decompositions
\begin{equation}
\rho_{AB}=\sum_{h}p_{h}\ket{\psi_h}_{AB}\bra{\psi_h}.
\label{decomp}
\end{equation}

Mathematically, monogamy of multi-party quantum entanglement was first characterized in three-qubit systems by Coffman, Kundu and Wootters(CKW)
~\cite{CKW}; using one and two tangles as the bipartite entanglement quantification, monogamy inequality of three-qubit entanglement was proposed as
\begin{equation}
\tau\left(\ket{\psi}_{A|BC}\right) \geq \tau\left(\rho_{A|B}\right)+\tau\left(\rho_{A|C}\right),
\label{eq: CKW}
\end{equation}
where  $\tau\left(\ket{\psi}_{A|BC}\right)$ is the one tangle of the three-qubit pure state $\ket{\psi}_{ABC}$ quantifying the bipartite entanglement between $A$ and $BC$,
and $\tau\left(\rho_{A|B}\right)$ and $\tau\left(\rho_{A|C}\right)$ are the two tangles
of the two-qubit reduced states $\rho_{AB}=\T_{C}\ket{\psi}_{ABC}\bra{\psi}$ and $\rho_{AC}=\T_{B}\ket{\psi}_{ABC}\bra{\psi}$,
respectively.

Later CKW inequality in (\ref{eq: CKW}) was generalized into $n$-qubit systems~\cite{OV} as
\begin{equation}
\tau\left(\ket{\psi}_{A_1|A_2\cdots A_n}\right) \geq \sum_{j=2}^{n}\tau\left(\rho_{A_1|A_j}\right),
\label{eq: OV}
\end{equation}
for one tangle $\tau\left(\ket{\psi}_{A_1|A_2\cdots A_n}\right)$ and two tangles $\tau\left(\rho_{A_1|A_j}\right)$ of
each two-qubit reduced density matrices $\rho_{A_1A_j}$ on subsystems $A_1A_j$ for each $j=2,\cdots ,n$.
However, tangle is known to fail in the generalization of CKW inequality for higher dimensional quantum systems
rather than qubits; there exist quantum states in $3 \otimes 3\otimes 3$ and even in $3 \otimes 2\otimes 2$ quantum systems violating
CKW inequality in (\ref{eq: CKW})~\cite{OU, KS}.

Now we consider another generalization of tangles from qubits to qudit systems using negativity~\cite{KDS}.
We first note that for any pure state $\ket{\psi}_{AB}$ with Schmidt-rank 2(especially for two-qubit pure state)
\begin{equation}
\ket{\psi}_{AB}=\sqrt{\lambda_1}\ket{e_0}_A\otimes\ket{f_0}_B+\sqrt{\lambda_2}\ket{e_1}_A\otimes\ket{f_1}_B,
\label{schmidt2}
\end{equation}
the {\em square} of negativity in Eq.~(\ref{eq:pure_negativity2}) coincides with the tangle in Eq.~(\ref{1tangle})
\begin{equation}
\mathcal{N}^2\left(\ket{\psi}_{A|B}\right)=4\lambda_1\lambda_2=\tau\left(\ket{\psi}_{A|B}\right).
\label{NSCeqt}
\end{equation}
Thus the two-tangle of any two-qubit state $\rho_{AB}$ in Eq.~(\ref{2tangle}) can be rephrased as
\begin{align}
\tau\left(\rho_{A|B}\right)
=&\bigg[\min_{\{p_h, \ket{\psi_h}\}}\sum_h p_h \mathcal{N}\left(\ket{\psi_h}_{A|B}\right)\bigg]^2
\label{tauscre1}
\end{align}
where the right-hand side of Eq.~(\ref{tauscre1}) is square of the convex-roof extended negativity in Eq.~(\ref{eq:c_negativity}).
Based on this idea, we propose a bipartite entanglement measure using negativity; for any two-qudit mixed state $\rho_{AB}$
its {\em square of convex-roof extended negativity}(SCREN) is defined as
\begin{align}
\mathcal{N}_{sc}(\rho_{A|B})=&\bigg[\min_{\{p_h, \ket{\psi_h}\}}\sum_h p_h \mathcal{N}\left(\ket{\psi_h}_{A|B}\right)\bigg]^2.\nonumber\\
\label{SCREN}
\end{align}

From the properties of convex-roof extended negativity in Eq.~(\ref{eq:c_negativity}), it is straightforward to check that
SCREN has the separability criterion and monotonicity under LOCC~\cite{LCOK}.
We also note that Eqs.~(\ref{tauscre1}) and (\ref{SCREN}) imply the coincidence of SCREN with two-tangle for any two-qubit state $\rho_{AB}$,
\begin{equation}
\mathcal{N}_{sc}(\rho_{A|B})=\tau\left(\rho_{A|B}\right)
\label{CRENTaneq}
\end{equation}
Consequently, the multi-qubit monogamy inequality in terms of tangles in (\ref{eq: OV}) can be
rephrased in terms of SCREN as,
\begin{equation}
{\mathcal{N}_{sc}}\left(\ket{\psi}_{A_1|A_2 \cdots A_n}\right)  \geq
\sum_{j=2}^{n}{\mathcal{N}_{sc}}\left(\rho_{A_1 |A_j}\right).
\label{nineq cren}
\end{equation}
Moreover, Inequality~(\ref{nineq cren}) still holds for the counterexamples~\cite{OU, KS} that violate
CKW inequality in higher-dimensional systems~\cite{KDS}.
Thus SCREN is a good generalization of two-qubit tangle into higher-dimensional quantum systems
without any known counterexamples even in higher-dimensional quantum systems so far.

\section{Strong Monogamy of Multi-Party Quantum Entanglement}
\label{Sec: SM_qubit}
\subsection{Multi-Qubit Strong Monogamy Inequality}
\label{sub: qubitSM}
For any three-qubit pure state $\ket{\psi}_{ABC}$, the residual entanglement from the difference between left
and right-hand sides of CKW Inequality~(\ref{eq: CKW})
is also interpreted as the genuine three-party entanglement, namely {\em three-tangle} of $\ket{\psi}_{ABC}$
\begin{equation}
\tau\left(\ket{\psi}_{A|B|C}\right)=\tau\left(\ket{\psi}_{A|BC}\right)-\tau\left(\rho_{A|B}\right)-\tau\left(\rho_{A|C}\right).
\label{3tangle}
\end{equation}
The three-tangle in Eq.~(\ref{3tangle}) is a good measure of genuine three-qubit entanglement,
which is invariant under the permutation of subsystems $A$, $B$ and $C$~\cite{DVC}.

The definition of three-tangle was generalized for arbitrary $n$-qubit quantum states~\cite{LA};
for an $n$-qubit pure state $\ket{\psi}_{A_1A_2\cdots A_n}$,
its {\em $n$-tangle} is defined as
\begin{align}
\tau\left(\ket{\psi}_{A_1|A_2|\cdots |A_n}\right)
=&\tau\left(\ket{\psi}_{A_1|A_2\cdots A_n}\right)\nonumber\\
&-\sum_{m=2}^{n-1} \sum_{\vec{j}^m}\tau\left(\rho_{A_1|A_{j^m_1}|\cdots |A_{j^m_{m-1}}}\right)^{m/2},
\label{eq:ntanglepure}
\end{align}
where the index vector $\vec{j}^m=(j^m_1,\ldots,j^m_{m-1})$ spans all the ordered subsets of the index set $\{2,\ldots,n\}$ with $(m-1)$ distinct elements.
Eq.~(\ref{eq:ntanglepure}) is a recurrent definition that needs all the $m$ tangles $\tau\left(\rho_{A_1|A_{j^m_1}|\cdots |A_{j^m_{m-1}}}\right)$
of $m$-qubit reduced density matrices $\rho_{A_1A_{j^m_1}\cdots A_{j^m_{m-1}}}$ for $2 \leq m \leq n-1$, where $\tau\left(\rho_{A_1|A_{j^m_1}|\cdots |A_{j^m_{m-1}}}\right)$
is defined as
\begin{widetext}
\begin{equation}
\tau\left(\rho_{A_1|A_{j^m_1}|\cdots |A_{j^m_{m-1}}}\right)=\bigg[\min_{\{p_h, \ket{\psi_h}\}}\sum_h p_h
\sqrt{\tau\left(\ket{\psi_h}_{A_1|A_{j^m_1}|\cdots |A_{j^m_{m-1}}}\right)}\bigg]^2,
\label{ntanglemix}
\end{equation}
\end{widetext}
with the minimization over all possible pure state decompositions
\begin{equation}
\rho_{A_1A_{j^m_1}\cdots A_{j^m_{m-1}}}=\sum_{h}p_{h}\ket{\psi_h}_{A_1A_{j^m_1}\cdots A_{j^m_{m-1}}}\bra{\psi_h}.
\label{decomp}
\end{equation}

For $n=3$, the definition of $n$-tangle in Eq.~(\ref{eq:ntanglepure}) reduces to that of three-tangle
in Eq.~(\ref{3tangle}) whose nonnegativity is equivalent to the CKW inequality (\ref{eq: CKW}).
In other words, the nonnegativity of three-tangle provides us
with a quantitative characterization of three-qubit monogamy of entanglement.
For $n=2$, Eq.~(\ref{ntanglemix}) also reduces to the two-tangle of two-qubit state $\rho_{A_1A_2}$ in Eq.~(\ref{2tangle}).

Based on this idea, a {\em strong monogamy}(SM) inequality of multi-qubit entanglement was proposed as
\begin{align}
\tau\left(\ket{\psi}_{A_1|A_2\cdots A_n}\right)\geq\sum_{m=2}^{n-1} \sum_{\vec{j}^m}\tau\left(\rho_{A_1|A_{j^m_1}|\cdots |A_{j^m_{m-1}}}\right)^{m/2}
\label{eq:SM}
\end{align}
by conjecturing the nonnegativity of $n$-tangle in Eq.~(\ref{eq:ntanglepure}).
The lower term in Inequality~(\ref{eq:SM}) appears in between the both sides of the $n$-qubit CKW inequality in (\ref{eq: OV}) as
\begin{align}
\tau\left(\ket{\psi}_{A_1|A_2\cdots A_n}\right)\geq&\sum_{j=2}^{n}\tau\left(\rho_{A_1|A_j}\right)\nonumber\\
&+\sum_{m=3}^{n-1} \sum_{\vec{j}^m}\tau\left(\rho_{A_1|A_{j^m_1}|\cdots |A_{j^m_{m-1}}}\right)^{m/2}\nonumber\\
\geq &\sum_{j=2}^{n}\tau\left(\rho_{A_1|A_j}\right),
\label{compar}
\end{align}
therefore it is a {\em stronger} inequality.
We also note that Inequality~(\ref{eq:SM}) encapsulates three-qubit CKW inequality in (\ref{eq: CKW}) for $n=3$.
Thus Inequality~(\ref{eq:SM}) is another generalization of three-qubit monogamy inequality
into multi-qubit systems in a stronger form.

For the validity of SM inequality in (\ref{eq:SM}), an extensive numerical evidence was presented
for four qubit systems together with analytical proof for some cases of multi-qubit systems.
It was also recently shown that Inequality (\ref{eq:SM}) is also true for a large class of multi-qubit generalized W-class states,
\begin{align}
\ket{\psi}_{A_1 A_2 ... A_n}=&
a_1 \ket{10\cdots0}+a_2 \ket{01\cdots0}\nonumber\\
&+...+a_n \ket{00\cdots1}
\label{supWV}
\end{align}
with
$\sum_{i=1}^{n}|a_j|^2 =1$
~\cite{Kim14}.

\subsection{SCREN Strong Monogamy Inequality}
\label{Subsec: SCRENmono}

Although Inequality~(\ref{eq:SM}) proposes a stronger monogamous property of
multi-qubit entanglement with various cases of analytic proof, Inequality~(\ref{eq:SM}) is no longer
valid for higher-dimensional quantum systems rather than qubits; for $n=3$, Inequality~(\ref{eq:SM}) becomes a CKW-type inequality of
three-party quantum systems,
\begin{equation}
\tau\left(\ket{\psi}_{A|BC}\right) \geq \tau\left(\rho_{A|B}\right)+\tau\left(\rho_{A|C}\right).
\label{eq: 3CKWtau}
\end{equation}
However, it is also known that there exists a pure state in $3 \otimes 2 \otimes 2$ quantum systems~\cite{KDS, KS},
\begin{equation}
\ket{\psi}_{ABC} = \frac{1}{\sqrt{6}}(\sqrt{2}\ket{010}+\sqrt{2}\ket{101}+\ket{200}+\ket{211}),
\label{count2}
\end{equation}
where $\tau\left(\ket{\psi}_{A|BC}\right)=\frac{12}{9}$ and $\tau\left(\rho_{A|B}\right)=\tau\left(\rho_{A|C}\right)=\frac{8}{9}$,
therefore
\begin{equation}
\tau\left(\ket{\psi}_{A|BC}\right) < \tau\left(\rho_{A|B}\right)+\tau\left(\rho_{A|C}\right).
\label{eq: counterineq}
\end{equation}
In other words, the counterexample for three-party CKW inequality in Eq.~(\ref{count2}) is also a counterexample
for SM inequality in (\ref{eq:SM}) in higher-dimensional quantum systems rather than qubits. Thus tangle-based SM inequality can
only be valid for multi-qubit systems and even a tiny extension in any of the subsystems leads to a violation.

Here we propose another generalization of multi-qubit SM inequality into higher-dimensional quantum systems using SCREN.
Due to the coincidence of tangle and SCREN for two-qubit states and any pure state of Schmidt-rank two in
Eq.~(\ref{CRENTaneq}), the definition of three-tangle in Eq.~(\ref{3tangle}) can be naturally rephrased in terms of SCREN;
for any three-qubit pure state $\ket{\psi}_{ABC}$,
\begin{align}
{\mathcal{N}_{sc}}\left(\ket{\psi}_{A|B|C}\right)=&{\mathcal{N}_{sc}}\left(\ket{\psi}_{A|BC}\right)\nonumber\\
&-{\mathcal{N}_{sc}}\left(\rho_{A|B}\right)-{\mathcal{N}_{sc}}\left(\rho_{A|C}\right).
\label{3CREN}
\end{align}
For analogous terminology, we denote ${\mathcal{N}_{sc}}\left(\ket{\psi}_{A|B|C}\right)$ in Eq.~(\ref{3CREN}) as
{\em three-SCREN} where ${\mathcal{N}_{sc}}\left(\ket{\psi}_{A|BC}\right)$ and ${\mathcal{N}_{sc}}\left(\rho_{A|B}\right)$ are {\em one-} and
{\em two-SCREN}, respectively.

Now we generalize the definition of three-SCREN in Eq.~(\ref{3CREN}) into arbitrary multi-party, higher-dimensional quantum systems.
For an $n$-qudit pure state $\ket{\psi}_{A_1A_2\cdots A_n}$,
its {\em $n$-SCREN} is defined as
\begin{widetext}
\begin{align}
{\mathcal{N}_{sc}}\left(\ket{\psi}_{A_1|A_2|\cdots |A_n}\right)
={\mathcal{N}_{sc}}\left(\ket{\psi}_{A_1|A_2\cdots A_n}\right)
-\sum_{m=2}^{n-1} \sum_{\vec{j}^m}{\mathcal{N}_{sc}}\left(\rho_{A_1|A_{j^m_1}|\cdots |A_{j^m_{m-1}}}\right)^{m/2},
\label{eq:nCRENpure}
\end{align}
\end{widetext}
where ${\mathcal{N}_{sc}}\left(\ket{\psi}_{A_1|A_2\cdots A_n}\right)$ is the one-SCREN of $n$-qudit pure state with respect to the bipartition between $A_1$ and the other qudit systems,
and
the $m$-SCREN
of $m$-qubit reduced density matrix $\rho_{A_1A_{j^m_1}\cdots A_{j^m_{m-1}}}$
is defined as
\begin{widetext}
\begin{equation}
{\mathcal{N}_{sc}}\left(\rho_{A_1|A_{j^m_1}|\cdots |A_{j^m_{m-1}}}\right)=\bigg[\min_{\{p_h, \ket{\psi_h}\}}\sum_h p_h
\sqrt{{\mathcal{N}_{sc}}\left(\ket{\psi_h}_{A_1|A_{j^m_1}|\cdots |A_{j^m_{m-1}}}\right)}\bigg]^2,
\label{nCRENmix}
\end{equation}
\end{widetext}
with the minimization over all possible pure state decompositions of $\rho_{A_1A_{j^m_1}\cdots A_{j^m_{m-1}}}$.
We also note that the index vector $\vec{j}^m=(j^m_1,\ldots,j^m_{m-1})$ in the second summation of
Eq.~(\ref{eq:nCRENpure}) spans all the ordered subsets of the index set $\{2,\ldots,n\}$ with $(m-1)$ distinct elements.

For a multi-qudit pure state $\ket{\psi}_{A_1A_2\cdots A_n}$, the {\em SCREN-SM inequality} of multi-party entanglement can be derived as
\begin{align}
{\mathcal{N}_{sc}}\left(\ket{\psi}_{A_1|A_2\cdots A_n}\right)\geq\sum_{m=2}^{n-1} \sum_{\vec{j}^m}{\mathcal{N}_{sc}}\left(\rho_{A_1|A_{j^m_1}|\cdots |A_{j^m_{m-1}}}\right)^{m/2},
\label{eq:CRENSM}
\end{align}
conjecturing the nonnegativity of $n$-SCREN in Eq.~(\ref{eq:nCRENpure}).
From the relation of SCREN and tangle in Eq.~(\ref{CRENTaneq}), Inequality~(\ref{eq:CRENSM}) is reduced to
Inequality~(\ref{eq:SM}) for any multi-qubit states. Thus Inequality~(\ref{eq:CRENSM}) is a generalization
of multi-qubit SM inequality in terms of tangle, which is valid for the classes of multi-qubit quantum states considered in~\cite{LA, Kim14}.

For the counterexample of CKW inequality in Eq.~(\ref{count2}), it is straightforward to check ${\mathcal{N}_{sc}}\left(\ket{\psi}_{A|BC}\right)=4$ whereas
${\mathcal{N}_{sc}}\left(\rho_{A|B}\right) = {\mathcal{N}_{sc}}\left(\rho_{A|C}\right)=\frac{8}{9}$, and thus
\begin{equation}
{\mathcal{N}_{sc}}\left(\ket{\psi}_{A|BC}\right) \geq {\mathcal{N}_{sc}}\left(\rho_{A|B}\right)+{\mathcal{N}_{sc}}\left(\rho_{A|C}\right).
\label{eq: 322CKWCREN}
\end{equation}
Moreover, for the other counterexample in $3\otimes 3\otimes 3$ quantum systems~\cite{OU},
\begin{align}
|\psi\rangle_{ABC}=\frac{1}{\sqrt{6}}(&|123\rangle-|132\rangle+|231\rangle\nonumber\\
&-|213\rangle+|312\rangle-|321\rangle),
\label{f}
\end{align}
we have ${\mathcal{N}_{sc}}\left(\ket{\psi}_{A|BC}\right)=4$ whereas
${\mathcal{N}_{sc}}\left(\rho_{A|B}\right) = {\mathcal{N}_{sc}}\left(\rho_{A|C}\right)=1$.
In other words, Inequality~(\ref{eq: 322CKWCREN}) is still true for all the known counterexamples of CKW inequality, therefore
SCREN is a good alternative of tangle in characterizing strongly monogamous property of multi-party entanglement.

\section{SCREN Strong Monogamy Inequality of Multi-qudit Entanglement}
\label{sec: SMgW}
\subsection{Multi-Qudit Generalized W-class States}
\label{Subsec: gW}
Let us recall the definition of multi-qudit generalized W-class
state~\cite{KS},
\begin{align}
\left|W_n^d \right\rangle_{A_1\cdots A_n}=\sum_{i=1}^{d-1}(&a_{1i}{\ket {i0\cdots 0}} +a_{2i}{\ket {0i\cdots 0}}\nonumber\\
&+\cdots +a_{ni}{\ket {00\cdots 0i}}),
\label{generalized W state}
\end{align}
with the normalization condition $\sum_{s=1}^{n}\sum_{i=1}^{d-1}|a_{si}|^2=1$.
The state in Eq.~(\ref{generalized W state}) is a coherent superposition of all $n$-qudit
product states with Hamming weight one. We also note that the term ``{\em generalized}'' naturally arises
because Eq.~(\ref{generalized W state}) includes $n$-qubit W-class states in Eq.~(\ref{supWV}) as a special case
when $d=2$.

Before we further investigate strongly monogamous property of entanglement for this generalized W-class state,
we first recall a very useful property of quantum states proposed by Hughston-Jozsa-Wootters(HJW) showing the unitary freedom
in the ensemble for density matrices~\cite{HJW}.
\begin{Prop} (HJW theorem)
The sets $\{|\tilde{\phi_i}\rangle\}$ and $\{|\tilde{\psi_j}\rangle\}$ of (possibly unnormalized) states generate the same density matrix
if and only if
\begin{equation}
|\tilde{\phi_i}\rangle=\sum_j u_{ij}|\tilde{\psi_j}\rangle\
\label{HJWeq}
\end{equation}
where $(u_{ij})$ is a unitary matrix of complex numbers, with indices $i$ and $j$, and we
{\em pad} whichever set of states $\{|\tilde{\phi_i}\rangle\}$ or $\{|\tilde{\psi_j}\rangle\}$ is smaller with additional zero vectors
so that the two sets have the same number of elements.
\label{HJWthm}
\end{Prop}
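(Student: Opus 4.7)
The plan is to prove the two implications separately. The sufficiency direction reduces to a direct computation: substituting $|\tilde{\phi_i}\rangle = \sum_j u_{ij}|\tilde{\psi_j}\rangle$ into $\sum_i |\tilde{\phi_i}\rangle\langle\tilde{\phi_i}|$ and invoking the unitarity identity $\sum_i u_{ij}u^*_{ik} = \delta_{jk}$ collapses the double sum to $\sum_j |\tilde{\psi_j}\rangle\langle\tilde{\psi_j}|$. I would dispatch this first as a short warm-up.

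For the necessity direction, I would use the purification argument. After applying the stated padding convention so that both index sets have a common cardinality $N$, I introduce an auxiliary Hilbert space $\mathcal{H}_R$ of dimension $N$ with orthonormal basis $\{|i\rangle_R\}$ and form the vectors
\begin{equation}
|\Phi\rangle_{AR} = \sum_i |\tilde{\phi_i}\rangle_A |i\rangle_R,\quad |\Psi\rangle_{AR} = \sum_j |\tilde{\psi_j}\rangle_A |j\rangle_R.
\end{equation}
A short computation shows that $\T_R|\Phi\rangle_{AR}\langle\Phi| = \T_R|\Psi\rangle_{AR}\langle\Psi|$ equals the common density matrix $\rho$, so both $|\Phi\rangle_{AR}$ and $|\Psi\rangle_{AR}$ are purifications of $\rho$. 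Invoking the essential uniqueness of purification up to a unitary on the ancilla, there exists a unitary $U$ on $\mathcal{H}_R$ with $|\Phi\rangle_{AR} = (I_A \otimes U)|\Psi\rangle_{AR}$. Writing $U|j\rangle_R = \sum_i u_{ij}|i\rangle_R$ and contracting both sides against $\langle i|_R$ then yields the desired identity $|\tilde{\phi_i}\rangle = \sum_j u_{ij}|\tilde{\psi_j}\rangle$.

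The main obstacle is justifying the uniqueness-of-purification lemma, which I would establish through the Schmidt decomposition. Expanding both $|\Phi\rangle_{AR}$ and $|\Psi\rangle_{AR}$ in Schmidt form with respect to the $A|R$ bipartition, the shared reduced state $\rho$ on $A$ pins down the Schmidt coefficients and fixes the Schmidt basis on $A$ up to free unitaries within each degenerate eigenspace. The two Schmidt bases on the $R$ side must therefore be related by a unitary on the subspace supporting the reduced $R$-state, which can be extended freely to the orthogonal complement to obtain a genuine unitary on all of $\mathcal{H}_R$. The subtle point lies in the careful bookkeeping of zero vectors and differing cardinalities of the two original ensembles: the padding convention in the statement is precisely what allows this extension to be a full unitary rather than only a partial isometry, and it is where I would concentrate most of the care when writing out the full proof.
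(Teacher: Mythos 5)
The paper does not actually prove this Proposition: it is quoted verbatim as the Hughston--Jozsa--Wootters theorem with a citation to the original reference, and is used as a black box in Lemma 1 and Theorems 1 and 2. So there is nothing in the paper to compare against, and your proposal should be judged as a standalone proof of a standard result. On that footing it is correct and is essentially the textbook argument: the sufficiency direction is the unitarity computation you describe, and the necessity direction is the purification argument, with $|\Phi\rangle_{AR}=\sum_i|\tilde{\phi_i}\rangle_A|i\rangle_R$ and $|\Psi\rangle_{AR}=\sum_j|\tilde{\psi_j}\rangle_A|j\rangle_R$ both purifying $\rho$ and hence related by a unitary on the ancilla. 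One small streamlining for the uniqueness-of-purification step: you do not need to track the freedom of the Schmidt basis on $A$ within degenerate eigenspaces. Fix once and for all an orthonormal eigenbasis $\{|a_k\rangle\}$ of $\rho$ and set $|v_k\rangle=(\langle a_k|\otimes I_R)|\Phi\rangle$; the condition $\mathrm{tr}_R|\Phi\rangle\langle\Phi|=\rho$ forces $\langle v_{k'}|v_k\rangle=\lambda_k\delta_{kk'}$, so $|v_k\rangle=\sqrt{\lambda_k}\,|r_k\rangle$ with $\{|r_k\rangle\}$ orthonormal, and likewise for $|\Psi\rangle$ with some $\{|r_k'\rangle\}$. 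The map $|r_k'\rangle\mapsto|r_k\rangle$ is an isometry between two $\mathrm{rank}(\rho)$-dimensional subspaces of the $N$-dimensional ancilla (note $\mathrm{rank}(\rho)\le N$ automatically), and extending it to a unitary on all of $\mathcal{H}_R$ is exactly where the padding convention earns its keep, as you correctly identify. With that bookkeeping made explicit your proof is complete.
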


A direct consequence of Proposition~\ref{HJWthm} is the following; for two pure-state decompositions
$\sum_{i}p_{i}\ket{\phi_i}\bra{\phi_i}$ and $\sum_{j}q_{j}\ket{\psi_j}\bra{\psi_j}$,
they represent the same density matrix, that is $\rho=\sum_{i}p_{i}\ket{\phi_i}\bra{\phi_i}=\sum_{j}q_{j}\ket{\psi_j}\bra{\psi_j}$
if and only if $\sqrt{p_{i}}\ket{\phi_i}=\sum_{j}u_{ij}\sqrt{q_{j}}\ket{\psi_j}$ for some unitary matrix $u_{ij}$.
Using Proposition~\ref{HJWthm}, we provide the following lemma, which shows a structural property of multi-qudit generalized W-class states.
\begin{Lem}
Let $\ket{\psi}_{A_1\cdots A_n}$ be a $n$-qudit pure state in a superposition of a $n$-qudit generalized W-class state
in Eq.~(\ref{supWV}) and vacuum, that is,
\begin{equation}
\ket{\psi}_{A_1A_2\cdots A_n}=\sqrt{p}\left|W_n^d \right\rangle_{A_1\cdots A_n}+\sqrt{1-p}\ket{0\cdots 0}_{A_1\cdots A_n}
\label{WV2}
\end{equation}
for $0\leq p \leq 1$. Let $\rho_{A_1A_{j_1}\cdots A_{j_{m-1}}}$ be a reduced density matrix of $\ket{\psi}_{A_1\cdots A_n}$
onto $m$-qudit subsystems $A_1A_{j_1}\cdots A_{j_{m-1}}$ with $2 \leq m \leq  n-1$.
For any pure state decomposition of $\rho_{A_1A_{j_1}\cdots A_{j_{m-1}}}$ such that
\begin{align}
\rho_{A_1A_{j_1}\cdots A_{j_{m-1}}}=\sum_{k}q_k\ket{\phi_k}_{A_1A_{j_1}\cdots A_{j_{m-1}}}\bra{\phi_k},
\label{rhoa1aj1ajm-1}
\end{align}
$\ket{\phi_k}_{A_1A_{j_1}\cdots A_{j_{m-1}}}$ is a superposition of a $m$-qudit generalized W-class state and vacuum.
\label{Lem: reduced}
\end{Lem}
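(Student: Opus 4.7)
The plan is to obtain an explicit rank-at-most-two decomposition of $\rho_{A_1A_{j_1}\cdots A_{j_{m-1}}}$ whose two pure components are themselves superpositions of an $m$-qudit generalized W-class state and vacuum, and then to invoke Proposition~\ref{HJWthm} to transport this property to every other pure-state decomposition.

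First I would partition the qudit labels as $S=\{1,j_1,\ldots,j_{m-1}\}$ and let $T$ be its complement, and regroup the terms of $\ket{\psi}_{A_1\cdots A_n}$ in Eq.~(\ref{WV2}) according to whether the single excitation of the generalized W-component sits inside $S$ or inside $T$. This rewrites the state as
\[
\ket{\psi}=\ket{\widetilde{\phi}}_S\otimes\ket{0\cdots 0}_T+\ket{0\cdots 0}_S\otimes\ket{\widetilde{\chi}}_T,
\]
where $\ket{\widetilde{\phi}}_S$ is an unnormalized superposition of an $m$-qudit generalized W-class state on $S$ and the $S$-vacuum, while $\ket{\widetilde{\chi}}_T$ is an unnormalized generalized W-class state on $T$ with no vacuum admixture.

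The key observation for the next step is that every term of $\ket{\widetilde{\chi}}_T$ has Hamming weight exactly one on $T$, so $\ket{\widetilde{\chi}}_T$ is orthogonal to $\ket{0\cdots 0}_T$. Tracing out $T$ therefore annihilates the cross terms and yields a manifest rank-two representation
\[
\rho_{A_1A_{j_1}\cdots A_{j_{m-1}}}=\ket{\widetilde{\phi}}_S\bra{\widetilde{\phi}}+c^{2}\,\ket{0\cdots 0}_S\bra{0\cdots 0},
\]
where $c^{2}$ is the squared norm of $\ket{\widetilde{\chi}}_T$. After proper normalization this is a two-term ensemble whose two pure components are, by construction, superpositions of an $m$-qudit generalized W-class state on $S$ and vacuum (the second term being the degenerate $p=0$ case). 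By Proposition~\ref{HJWthm}, for any other pure-state decomposition $\rho_{A_1A_{j_1}\cdots A_{j_{m-1}}}=\sum_{k}q_{k}\ket{\phi_k}_S\bra{\phi_k}$, each $\sqrt{q_k}\ket{\phi_k}_S$ is a complex linear combination of these two normalized components; and since both components lie in the $(m(d-1)+1)$-dimensional subspace spanned by $\ket{0\cdots 0}_S$ together with the Hamming-weight-one kets $\ket{i0\cdots 0}_S,\ldots,\ket{0\cdots 0\,i}_S$ for $i=1,\ldots,d-1$, so does each $\ket{\phi_k}_S$, which is precisely the desired form.

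I expect the only real obstacle to be bookkeeping: tracking normalization carefully and allowing the degenerate case in which the W-component of some $\ket{\phi_k}_S$ vanishes (so that $\ket{\phi_k}_S=\ket{0\cdots 0}_S$) to count as a superposition with $p=0$. The substantive step, namely that $\ket{0\cdots 0}_T$ is orthogonal to every Hamming-weight-one vector on $T$, is immediate once the $S$/$T$ splitting has been performed, and HJW then closes the argument without further work.
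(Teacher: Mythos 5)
Your proof is correct and follows essentially the same route as the paper: both exhibit the rank-two decomposition $\rho_{S}=\ket{\tilde{x}}_S\bra{\tilde{x}}+c^2\ket{0\cdots 0}_S\bra{0\cdots 0}$ (your orthogonality argument for the vanishing cross terms is just a cleaner derivation of what the paper calls ``a straightforward calculation'') and then invoke Proposition~\ref{HJWthm} to conclude that every element of any other pure-state decomposition lies in the span of an $m$-qudit generalized W-class state and the vacuum. No substantive difference.
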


\begin{proof}
Due to the symmetry of the structure of multi-qudit generalized W-class states with respect to permuting subsystems,
here we only consider the reduced density matrix $\rho_{A_1A_2\cdots A_m}$ of the first $m$ qudits subsystems $A_1A_2\cdots A_m$, where the general
cases of $m$-qudit subsystems $A_1A_{j_1}\cdots A_{j_{m-1}}$ is then analogously following.

From a straightforward calculation, we obtain
\begin{align}
\rho_{A_1A_2\cdots A_m}=\ket{\tilde{x}}_{A_1A_2\cdots A_m}\bra{\tilde{x}}
+\ket{\tilde{y}}_{A_1A_2\cdots A_m}\bra{\tilde{y}},
\label{mrho}
\end{align}
where
\begin{widetext}
\begin{align}
\ket{\tilde{x}}_{A_1A_2\cdots A_m}=
&\sqrt{p}\sum_{i=1}^{d-1}\left(a_{1i}\ket{i0\cdots0}_{A_1A_2\cdots A_m}+a_{2i}\ket{0i0\cdots0}_{A_1A_2\cdots A_m}+
\cdots+a_{mi}\ket{00\cdots i}_{A_1A_2\cdots A_m}\right)\nonumber\\
&+\sqrt{1-p}\ket{00\cdots 0}_{A_1A_2\cdots A_m},\nonumber\\
\ket{\tilde{y}}_{A_1A_2\cdots A_m}=&\sqrt{p\sum_{i=1}^{d-1}\left(|a_{m+1i}|^2+\cdots +|a_{ni}|^2\right)}\ket{00\cdots 0}_{A_1A_2\cdots A_m}
\label{xym}
\end{align}
\end{widetext}
are the unnormalized states in $m$-qubit subsystems $A_1A_2\cdots A_m$.

Now, let us consider the unnormalized states $\tilde{\ket{\phi_k}}_{A_1A_2\cdots A_m}=\sqrt{q_k}\ket{\phi_k}_{A_1A_2\cdots A_m}$ for each $k$
in the pure-state decomposition Eq.~(\ref{rhoa1aj1ajm-1}). From Proposition~\ref{HJWthm},
there exists an $r\times r$ unitary matrix $(u_{kl})$ such that
\begin{equation}
|\tilde{\phi_k}\rangle_{A_1A_2\cdots A_m}=u_{k1}\ket{\tilde{x}}_{A_1A_2\cdots A_m}+u_{k2}\ket{\tilde{y}}_{A_1A_2\cdots A_m},
\label{HJWrelation2}
\end{equation}
for each $k$. Moreover, Eqs.~(\ref{xym}) imply that both $\ket{\tilde{x}}_{A_1A_2\cdots A_m}$ and $\ket{\tilde{y}}_{A_1A_2\cdots A_m}$ are
linear combinations of $m$-qudit generalized W-class states and vacuums. In other words, $|\tilde{\phi_k}\rangle_{A_1A_2\cdots A_m}$ in Eq.~(\ref{HJWrelation2})
is an unnormalized superposition of a $m$-qudit generalized W-class state and vacuum for each $k$. Thus the same is true for the normalized state $\ket{\phi_k}_{A_1A_2\cdots A_m}$
for each $k$.
\end{proof}

\subsection{SCREN Strong Monogamy Inequality and Generalized W-class States}
\label{subsec: Sat}

In this scetion, we prove that the multi-qudit SCREN SM inequality of entanglement is true for a large class of multi-qubit quantum states in Eq.~(\ref{WV2});
superposition of multi-qudit generalized W-class states and vacuums. We first provide the following theorem about
the multi-qudit generalized W-class and the CKW-type monogamy inequality.

\begin{Thm}
For a $n$-qudit pure state
\begin{equation}
\ket{\psi}_{A_1A_2\cdots A_n}=\sqrt{p}\left|W_n^d \right\rangle_{A_1\cdots A_n}+\sqrt{1-p}\ket{0\cdots 0}_{A_1\cdots A_n}
\label{WV3}
\end{equation}
where $\left|W_n^d \right\rangle_{A_1\cdots A_n}$ is a $n$-qudit generalized W-class state in Eq.~(\ref{supWV}) and
$\ket{0\cdots 0}_{A_1\cdots A_n}$ is the vacuum, we have
\begin{align}
{\mathcal{N}_{sc}}\left(\ket{\psi}_{A_1|A_2\cdots A_n}\right)=
{\mathcal{N}_{sc}}\left(\rho_{A_1 |A_2}\right) +\cdots+{\mathcal{N}_{sc}}\left(\rho_{A_1|A_n}\right),
\label{eq: SCsat}
\end{align}
where ${\mathcal{N}_{sc}}\left(\ket{\psi}_{A_1|A_2\cdots A_n}\right)$ is the ons-SCREN of $\ket{\psi}_{A_1A_2\cdots A_n}$ with respect to the bipartition
between $A_1$ and the other qudits, and ${\mathcal{N}_{sc}}\left(\rho_{A_1 |A_s}\right)$ is the two-SCREN of of the two-qudit state $\rho_{A_1A_s}$
with $s=2,\cdots ,n$.
\label{SCsat}
\end{Thm}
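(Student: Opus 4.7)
The plan is to compute both sides of Eq.~(\ref{eq: SCsat}) in closed form and compare. For the left-hand side, note that on subsystem $A_1$ the state $\ket{\psi}$ is supported on only the two orthogonal directions $\ket{0}$ and $\ket{A_1^W}\propto\sum_{i\geq 1}a_{1i}\ket{i}$, so the Schmidt rank across $A_1|A_2\cdots A_n$ is at most two. Writing $\alpha_s^2:=\sum_{i=1}^{d-1}|a_{si}|^2$, a direct Gram-matrix computation shows that the two nonzero eigenvalues $\lambda_1,\lambda_2$ of $\rho_{A_1}$ satisfy $\lambda_1\lambda_2 = p^2\alpha_1^2(1-\alpha_1^2)$. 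Since $\mathcal{N}(\ket{\psi}_{A_1|A_2\cdots A_n})=2\sqrt{\lambda_1\lambda_2}$ when the Schmidt rank is two, the one-SCREN equals $4p^2\alpha_1^2\sum_{s\geq 2}\alpha_s^2$.

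For the right-hand side, I would analyze each two-qudit reduction $\rho_{A_1A_s}$ separately. Tracing out the remaining qudits yields $\rho_{A_1A_s}=\ket{X}\bra{X}+\ket{Y}\bra{Y}$, where $\ket{X}=\sqrt{p}\sum_i a_{1i}\ket{i0}+\sqrt{p}\sum_j a_{sj}\ket{0j}+\sqrt{1-p}\ket{00}$ collects all amplitudes with the other qudits in vacuum, while $\ket{Y}\propto\ket{00}_{A_1A_s}$ is \emph{pure vacuum} and absorbs the weight coming from excitations at the traced-out qudits. By Proposition~\ref{HJWthm}, every pure-state decomposition has the form $\sqrt{q_k}\ket{\psi_k}=u_{k1}\ket{X}+u_{k2}\ket{Y}$ for some isometry $(u_{kl})$ with $\sum_k|u_{k1}|^2=1$, and by Lemma~\ref{Lem: reduced} each $\ket{\psi_k}$ is itself a superposition of a 2-qudit W-class state and vacuum.

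A short auxiliary computation (parallel to the one for $\ket{\psi}$) shows that any such 2-qudit pure state $\ket{\phi}=\sum_i b_{1i}\ket{i0}+\sum_j b_{sj}\ket{0j}+c\ket{00}$ has negativity $\mathcal{N}(\ket{\phi})=2\beta_1\beta_s$, where $\beta_t^2:=\sum_i|b_{ti}|^2$. Since only $u_{k1}\ket{X}$ carries any W-amplitudes in $\sqrt{q_k}\ket{\psi_k}$, the W-norms of the normalized $\ket{\psi_k}$ satisfy $\beta_t^{(k)2}=|u_{k1}|^2 p\alpha_t^2/q_k$, which gives $q_k\mathcal{N}(\ket{\psi_k})=2|u_{k1}|^2 p\alpha_1\alpha_s$. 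Summing over $k$ and using the isometry identity $\sum_k|u_{k1}|^2=1$ yields $\sum_k q_k\mathcal{N}(\ket{\psi_k})=2p\alpha_1\alpha_s$, a value \emph{independent} of the chosen decomposition. Consequently $\mathcal{N}_{sc}(\rho_{A_1|A_s})=4p^2\alpha_1^2\alpha_s^2$, and summing over $s=2,\dots,n$ reproduces $4p^2\alpha_1^2(1-\alpha_1^2)$, exactly matching the one-SCREN and proving Eq.~(\ref{eq: SCsat}).

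The main obstacle is establishing the decomposition-independence of $\sum_k q_k\mathcal{N}(\ket{\psi_k})$. It is critical that $\ket{Y}$ lies entirely in the vacuum subspace, a structural fact that ultimately traces back to Lemma~\ref{Lem: reduced} and the observation that all excitations lost to the partial trace over $A_t$ (with $t\neq 1,s$) collapse to the $\ket{00}_{A_1A_s}$ component. This split causes the $u_{k2}$-dependence to drop out of the W-norms $\beta_t^{(k)}$, reducing the sum to a quadratic in the first column of the isometry, which is pinned down by orthonormality. Without this feature the minimization over decompositions would be a genuine optimization problem and the clean saturation in Eq.~(\ref{eq: SCsat}) would not be available.
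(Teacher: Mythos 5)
Your proposal is correct and follows essentially the same route as the paper's own proof: both compute the one-SCREN from the rank-two marginal $\rho_{A_1}$, write each two-qudit marginal $\rho_{A_1A_s}$ as a rank-two mixture of a W-plus-vacuum vector and a pure-vacuum vector, and invoke the HJW unitary freedom plus column normalization to show the average negativity $2p\alpha_1\alpha_s$ is independent of the decomposition, so the minimization is trivial. The only cosmetic difference is that you extract the Schmidt product $\lambda_1\lambda_2$ via a Gram determinant, whereas the paper uses the equivalent formula $\left(\mathrm{tr}\sqrt{\rho_{A_1}}\right)^2-1$.
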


\begin{proof}
For the one-SCREN of $\ket{\psi}_{A_1 \cdots A_n}$ with respect to the bipartition between $A_1$ and the other qudits,
the reduced density matrix $\rho_{A_1}$ of $\ket{\psi}_{A_1 \cdots A_n}$ onto subsystem $A_1$ is obtained as
\begin{align}
\rho_{A_1}=&\T_{A_2\cdots A_n}\ket{\psi}_{A_1 A_2 ... A_n}\bra{\psi}\nonumber\\
=&p\sum_{i,j=1}^{d-1}a_{1i}a^{*}_{1j}\ket{i}_{A_1}\bra{j}+\left[p\Omega+\left(1-p\right)\right]\ket{0}_{A_1}\bra{0}\nonumber\\
&+\sqrt{p\left(1-p\right)}\left[\sum_{i=1}^{d-1}a_{1i}\ket{i}_{A_1}\bra{0}
+\sum_{j=1}^{d-1}a^{*}_{1j}\ket{0}_{A_1}\bra{j}\right],
\label{rho_A_1}
\end{align}
where $\Omega=\sum_{s=2}^{n}\sum_{i=1}^{d-1}|a_{si}|^2=1-\sum_{j=1}^{d-1}|a_{1j}|^2$.

From the the definition of pure state negativity in Eq.~(\ref{eq:pure_negativity2}) together with Eq.~(\ref{rho_A_1}), we have the one-SCREN of
$\ket{\psi}_{A_1A_2\cdots A_n}$ between $A_1$ and the other qudits as
\begin{align}
{\mathcal{N}_{sc}}\left(\ket{\psi}_{A_1|A_2\cdots A_n}\right)=&\left((\T{ \sqrt{\rho_{A_1}}})^2 -1\right)^2\nonumber\\
=&4p^2\left(1-\Omega\right)\Omega.
\label{1SCWpure}
\end{align}

For the two-SCREN's ${\mathcal{N}_{sc}}\left(\rho_{A_1A_s}\right)$ with $s=2,\cdots , n$ that appear the right-hand side of Eq.~(\ref{eq: SCsat}),
we first consider the case when $s=2$, where all the other cases are analogously following.
The two-qudit reduced density matrix $\rho_{A_1A_2}$ of $\ket{\psi}_{A_1 A_2 ... A_n}$ is obtained as
\begin{widetext}
\begin{align}
\rho_{A_1A_2}=&\T_{A_3\cdots A_n}\ket{\psi}_{A_1 A_2 ... A_n}\bra{\psi}\nonumber\\
=&p\sum_{i,j=1}^{d-1}\left[a_{1i}a^*_{1j}\ket{i0}_{A_1A_2}\bra{j0}+a_{1i}a^*_{2j}\ket{i0}_{A_1A_2}\bra{0j}+a_{2i}a^*_{1j}\ket{0i}_{A_1A_2}\bra{j0}
+a_{2i}a^*_{2j}\ket{0i}_{A_1A_2}\bra {0j}\right]\nonumber\\
&+(\Omega_{2}+1-p)\ket{00}_{A_1A_2}\bra{00}\nonumber\\
&+\sqrt{p(1-p)}\sum_{k=1}^{d-1}\left[(a_{1k}\ket{k0}+a_{2k}\ket{0k})_{A_1A_2}\bra{00}
+a^*_{1k}\ket{00}_{A_1A_2}(\bra {k0}+a^*_{2k}\bra {0k})\right],
\label{rho12 matrix}
\end{align}
\end{widetext}
with $\Omega_{2}=1-\sum_{j=1}^{d-1}(|a_{1j}|^2+|a_{2j}|^2)$.
We further note that, by considering two unnormalized states
\begin{align}
\ket{\tilde{x}}_{A_1 A_2}=&\sqrt{p}\sum_{i=1}^{d-1}(a_{1i}\ket {i0}_{A_1 A_2}+a_{2i}\ket
{0i}_{A_1 A_2})\nonumber\\
&~~~~~~~~+\sqrt{1-p}\ket {00}_{A_1 A_2},\nonumber\\
\ket{\tilde{y}}=&\sqrt{\Omega_{2}}\ket {00}_{A_1 A_2},
\end{align}
$\rho_{A_1A_2}$ in Eq.~(\ref{rho12 matrix}) can be represented as
\begin{equation}
\rho_{A_1A_2}=\ket{\tilde{x}}_{A_1 A_2}\bra{\tilde{x}}+\ket{\tilde{y}}_{A_1 A_2}\bra{\tilde{y}}.
\end{equation}

Now Proposition~\ref{HJWthm} implies that
for any pure state decomposition
\begin{equation}
\rho_{A_1 A_2}=\sum_{h}|\tilde{\phi_h}\rangle_{A_1 A_2} \langle\tilde{\phi_h}|,
\label{decomp}
\end{equation}
where
$|\tilde{\phi_h}\rangle_{A_1 A_2}$ is an unnormalized state in two-qudit subsystem $A_1A_2$,
there exists an $r\times r$ unitary matrix $(u_{hl})$ such that
\begin{equation}
|\tilde{\phi_h}\rangle_{A_1A_2}=u_{h1}\ket{\tilde{x}}_{A_1 A_2}+u_{h2}\ket{\tilde{y}}_{A_1 A_2},
\label{HJWrelation}
\end{equation}
for each $h$.
For the normalized state $\ket{\phi_h}_{A_1 A_2}=|\tilde{\phi}_h\rangle_{A_1 A_2}/\sqrt{p_h}$
with $ p_h =|\langle\tilde{\phi}_h|\tilde{\phi}_h\rangle|$, the definition of pure state negativity in Eq.~(\ref{eq:pure_negativity2})
leads us to the two-SCREN of $\ket{\phi_h}_{A_1 A_2}$,
\begin{align}
{\mathcal{N}_{sc}}\left(\ket{\phi_h}_{A_1|A_2}\right)=&\frac{4}{p_h^2}p^2|u_{h2}|^4\left(1-\Omega\right)\left(\Omega-\Omega_{2}\right)\nonumber\\
=&\frac{4}{p_h^2}p^2|u_{h2}|^4\left(1-\Omega\right)\sum_{i=1}^{d-1}|a_{2i}|^2
\label{SCRENphi_h}
\end{align}
for each $h$.

From the definition of SCREN for mixed states in Eq.~(\ref{nCRENmix}) together with Eq.~(\ref{SCRENphi_h}),
we have the two-SCREN of $\rho_{A_1A_2}$ as
\begin{align}
{\mathcal{N}_{sc}}\left(\rho_{A_1|A_2}\right)=&\bigg[\min_{\{p_h, \ket{\phi_h}\}}\sum_h p_h \sqrt{{\mathcal{N}_{sc}}\left(\ket{\phi_h}_{A_1|A_2}\right)}\bigg]^2\nonumber\\
=&\left[\min_{\{p_h, \ket{\phi_h}\}}\sum_h 2p|u_{h2}|^2\sqrt{\left(1-\Omega\right)\sum_{i=1}^{d-1}|a_{2i}|^2}\right]^2\nonumber\\
=&4p^2\left(1-\Omega\right)\sum_{i=1}^{d-1}|a_{2i}|^2.
\label{2SCREN12}
\end{align}
where the last equality is due to the choice of $u_{h2}$ from the unitary matrix $(u_{hl})$.
Here we note that the minimum average of the square-root of SCREN in Eq.~(\ref{2SCREN12}) does not depend on the choice of pure-state decomposition
of $\rho_{A_1A_2}$, so that we could circumvent the minimization problem therein.

By using an analogous method, we have the two-SCREN of two-qudit mixed state $\rho_{A_1A_s}$ as
\begin{align}
{\mathcal{N}_{sc}}\left(\rho_{A_1|A_s}\right)=&4p^2\left(1-\Omega\right)\sum_{i=1}^{d-1}|a_{si}|^2,
\label{2SCRENs}
\end{align}
for each $s=2,\cdots ,n$.
Now Eqs.~(\ref{1SCWpure}) and (\ref{2SCRENs}) leads us to
\begin{align}
{\mathcal{N}_{sc}}\left(\ket{\psi}_{A_1|A_2\cdots A_n}\right)=&4p^2\left(1-\Omega\right)\Omega\nonumber\\
=&4p^2\left(1-\Omega\right)\sum_{s=2}^{n}\sum_{i=1}^{d-1}|a_{si}|^2\nonumber\\
=&\sum_{s=2}^{n}\left[4p^2\left(1-\Omega\right)\sum_{i=1}^{d-1}|a_{si}|^2\right]\nonumber\\
=&\sum_{s=2}^{n}{\mathcal{N}_{sc}}\left(\rho_{A_1|A_s}\right),
\label{sums}
\end{align}
which completes the proof.
\end{proof}
Theorem~\ref{SCsat} implies that Inequality (\ref{nineq cren}), the multi-qubit CKW inequality in terms of one, and two-SCREN,
is still true and in fact saturated for the class of multi-qudit states in Eq.~(\ref{WV2}).

To check the validity of SCREN SM inequality in (\ref{eq:CRENSM}) for the class of states
in Eq.~(\ref{WV2}), we first note that Inequality (\ref{eq:CRENSM}) can be decomposed as
\begin{align}
{\mathcal{N}_{sc}}\left(\ket{\psi}_{A_1|A_2\cdots A_n}\right)\geq&\sum_{m=3}^{n-1} \sum_{\vec{j}^m}{\mathcal{N}_{sc}}
\left(\rho_{A_1|A_{j^m_1}|\cdots |A_{j^m_{m-1}}}\right)^{m/2}\nonumber\\
&+\sum_{j=2}^{n}{\mathcal{N}_{sc}}\left(\rho_{A_1|A_j}\right),\nonumber\\
\label{compar2}
\end{align}
where the second summation of the first term on the right-hand side of the inequality runs over all
the index vectors $\vec{j}^m=(j^m_1,\ldots,j^m_{m-1})$ with $3\leq m \leq n-1$.

By Theorem~\ref{SCsat},
the last term of the right-hand side and the left-hand side of
of Inequality~(\ref{compar2}) are equal to each other for the class of states in Eq.~(\ref{WV2}). Thus
this class of states are good candidates as possible counterexamples
for stronger version of monogamy inequalities, that is, Inequality (\ref{eq:CRENSM}).
Moreover, the validity of SCREN SM inequality for this class of states necessarily implies that Inequality
(\ref{eq:CRENSM}) must be saturated, that is, the residual term
\begin{align}
\sum_{m=3}^{n-1} \sum_{\vec{j}^m}{\mathcal{N}_{sc}}\left(\rho_{A_1|A_{j^m_1}|\cdots |A_{j^m_{m-1}}}\right)^{m/2}
\label{resizero}
\end{align}
in (\ref{compar2}) is zero for the class of states in Eq.~(\ref{WV2}).
The following theorem states the main result of this paper, the saturation of multi-qudit SM inequality for the class of states in in Eq.~(\ref{WV2}).

\begin{Thm}
For the class of $n$-qudit states $\ket{\psi}_{A_1A_2 \cdots A_n}$ in Eq.~(\ref{WV2}) that is a superposition of a $n$-qudit generalized W-class state and the vacuum,
the multi-qudit SM inequality of entanglement in terms of SCREN is saturated;
\begin{align}
{\mathcal{N}_{sc}}\left(\ket{\psi}_{A_1|A_2\cdots A_n}\right)=\sum_{m=2}^{n-1} \sum_{\vec{j}^m}{\mathcal{N}_{sc}}\left(\rho_{A_1|A_{j^m_1}|\cdots |A_{j^m_{m-1}}}\right)^{m/2}.
\label{eq:SMsat}
\end{align}
\label{SMsat}
\end{Thm}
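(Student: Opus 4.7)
The plan is to reduce Theorem~\ref{SMsat} to showing that every mixed $m$-SCREN with $m\ge 3$ appearing on the right-hand side of (\ref{eq:SMsat}) vanishes, and then to establish this vanishing by induction on $m$ using Lemma~\ref{Lem: reduced} together with Theorem~\ref{SCsat}. Since Theorem~\ref{SCsat} already identifies the full $m=2$ contribution $\sum_{j=2}^{n}\mathcal{N}_{sc}(\rho_{A_1|A_j})$ on the right-hand side with the one-SCREN on the left, the identity (\ref{eq:SMsat}) is equivalent to
\begin{equation}
\sum_{m=3}^{n-1}\sum_{\vec j^m}\mathcal{N}_{sc}\left(\rho_{A_1|A_{j^m_1}|\cdots|A_{j^m_{m-1}}}\right)^{m/2}=0,
\end{equation}
and because every summand is non-negative, it suffices to prove that each individual mixed $m$-SCREN with $m\ge 3$ is zero.

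Next, I would prove by strong induction on $m\ge 3$ the stronger pure-state statement that for any $\ket{\phi}$ which is a superposition of an $m$-qudit generalized W-class state and vacuum, the pure-state $m$-SCREN $\mathcal{N}_{sc}(\ket{\phi}_{A_1|A_{j_1}|\cdots|A_{j_{m-1}}})$ equals zero. Granted this, Lemma~\ref{Lem: reduced} guarantees that every pure-state component in any decomposition of an $m$-qudit reduction of $\ket{\psi}_{A_1\cdots A_n}$ is itself a W-vacuum superposition, so the square-root terms in (\ref{nCRENmix}) all vanish and the mixed $m$-SCREN is zero. For the base case $m=3$, the recursive formula (\ref{eq:nCRENpure}) gives $\mathcal{N}_{sc}(\ket{\phi}_{A_1|A_{j_1}|A_{j_2}})=\mathcal{N}_{sc}(\ket{\phi}_{A_1|A_{j_1}A_{j_2}})-\mathcal{N}_{sc}(\rho'_{A_1|A_{j_1}})-\mathcal{N}_{sc}(\rho'_{A_1|A_{j_2}})$, and Theorem~\ref{SCsat} applied to the three-qudit W-vacuum pure state $\ket{\phi}$ makes this cancel to zero. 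For the inductive step, Theorem~\ref{SCsat} applied to $\ket{\phi}$ as an $m$-qudit pure state shows that $\mathcal{N}_{sc}(\ket{\phi}_{A_1|A_{j_1}\cdots A_{j_{m-1}}})$ equals exactly the $k=2$ portion of the subtraction in (\ref{eq:nCRENpure}), while for each $k\ge 3$ the term $\mathcal{N}_{sc}(\rho'_{A_1|A_{l^k_1}|\cdots|A_{l^k_{k-1}}})$ is the mixed $k$-SCREN of a reduction of the $m$-qudit W-vacuum superposition $\ket{\phi}$, which by the induction hypothesis (applied via Lemma~\ref{Lem: reduced} to $\ket{\phi}$ itself) is zero; hence the entire recursive expression collapses.

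The main obstacle, as I see it, is not analytic but organisational: one has to iterate Lemma~\ref{Lem: reduced} correctly so that at every level of the recursion the W-vacuum structure is preserved, and then apply Theorem~\ref{SCsat} to each pure component on the appropriate number of parties. Some care is also required in matching the index sets $\vec{j}^m$ of the SM inequality with the index sets $\vec{l}^k$ that appear inside the recursive definition of the $m$-SCREN of each component, and in confirming that the square-root averages in (\ref{nCRENmix}) are well-defined at each inductive stage because all the relevant SCRENs are genuinely non-negative. Once this bookkeeping is in place, the cancellation between the one-SCREN term and the $k=2$ subtraction furnished by Theorem~\ref{SCsat}, together with the inductive vanishing of all $k\ge 3$ contributions, closes the induction and establishes (\ref{eq:SMsat}).
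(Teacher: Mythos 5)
Your proposal matches the paper's proof essentially step for step: reduce the claim to the vanishing of every mixed $m$-SCREN with $m\ge 3$ on the right-hand side of (\ref{eq:SMsat}), then induct on $m$, using Lemma~\ref{Lem: reduced} to guarantee that all pure-state components in the relevant decompositions remain W-vacuum superpositions and Theorem~\ref{SCsat} to cancel the one-SCREN of each component against its two-SCREN terms, with the induction hypothesis killing the $s\ge 3$ terms in the recursion (\ref{eq:nCRENpure}). Your choice to phrase the induction hypothesis as a pure-state statement about arbitrary $m$-qudit W-vacuum superpositions is, if anything, a slight tightening of the paper's version, whose hypothesis is stated only for reduced states of the original $\ket{\psi}$ yet is then applied to reductions of the components $\ket{\phi_k}$.
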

\begin{proof}
As mentioned, it is enough to show that the residual term in Eq.~(\ref{resizero}) is zero for the class of states in Eq.~(\ref{WV2}).
In fact, we further show that
\begin{align}
{\mathcal{N}_{sc}}\left(\rho_{A_1|A_{j^m_1}|\cdots |A_{j^m_{m-1}}}\right)=0
\label{mSCREN0}
\end{align}
for all the index vectors $\vec{j}^m=(j^m_1,\ldots,j^m_{m-1})$ with $3\leq m \leq n-1$, that is,
all the $m$-SCREN for $3\leq m \leq n-1$ is zero for the $m$-qudit reduced density matrices $\rho_{A_1A_{j^m_1}\cdots A_{j^m_{m-1}}}$.

We use the mathematical induction on $m$, and first consider the case when $m=3$.
For any index vector $\vec{j}^3=(j_1, j_2)$ with $j_1,~j_2 \in \{2,3,\cdots,n\}$~\cite{omit},
the left-hand side of Eq.~(\ref{mSCREN0}) becomes the three-SCREN of the three-qudit
reduced density matrix $\rho_{A_1A_{j_1}A_{j_2}}$,
\begin{align}
{\mathcal{N}_{sc}}&\left(\rho_{A_1|A_{j_1}|A_{j_2}}\right)\nonumber\\
&~~~=\bigg[\min_{\{p_h, \ket{\psi_h}\}}\sum_h p_h
\sqrt{{\mathcal{N}_{sc}}\left(\ket{\psi_h}_{A_1|A_{j_1}|A_{j_2}}\right)}\bigg]^2,
\label{3CRENmix}
\end{align}
where the minimization is over all possible pure state decompositions of $\rho_{A_1A_{j_1}A_{j_2}}$.
Let us consider an optimal decomposition
\begin{align}
\rho_{A_1A_{j_1}A_{j_2}}=\sum_{k}q_{k}\ket{\phi_k}_{A_1|A_{j_1}|A_{j_2}}\bra{\phi_k},
\label{3decoopt}
\end{align}
realizing the three-SCREN of of $\rho_{A_1A_{j_1}A_{j_2}}$,
\begin{align}
{\mathcal{N}_{sc}}&\left(\rho_{A_1|A_{j_1}|A_{j_2}}\right)\nonumber\\
&~~~~~~~=\bigg[\sum_k q_k\sqrt{{\mathcal{N}_{sc}}\left(\ket{\phi_k}_{A_1|A_{j_1}|A_{j_2}}\right)}\bigg]^2.
\label{3SCRENopt}
\end{align}

Because $\rho_{A_1A_{j_1}A_{j_2}}$ is a three-qudit reduced density matrix of $\ket{\psi}_{A_1A_2\cdots A_n}$ in Eq.~(\ref{WV2}),
Lemma~\ref{Lem: reduced} implies that $\ket{\phi_k}_{A_1A_{j_1}A_{j_2}}$ in Eq.~(\ref{3decoopt}) is a superposition of a three-qudit
generalized W-class state and vacuum for each $k$. Due to Theorem~\ref{SCsat}, we also note that CKW-type monogamy inequality
in terms of SCREN is saturated by $\ket{\phi_k}_{A_1A_{j_1}A_{j_2}}$ in Eq.~(\ref{3decoopt}) for each $k$;
\begin{align}
{\mathcal{N}_{sc}}\left(\ket{\phi_k}_{A_1|A_{j_1}A_{j_2}}\right)=
{\mathcal{N}_{sc}}\left(\rho^k_{A_1 |A_{j_1}}\right)+{\mathcal{N}_{sc}}\left(\rho^k_{A_1|A_{j_2}}\right),
\label{eq: 3SCsat}
\end{align}
where $\rho^k_{A_1A_{j_1}}$ and $\rho^k_{A_1A_{j_2}}$ are the reduced density matrices of $\ket{\phi_k}_{A_1A_{j_1}A_{j_2}}$
onto two-qudit subsystems $A_1A_{j_1}$ and $A_1A_{j_2}$ respectively.

From the definition of pure-state SCREN in Eq.~(\ref{eq:nCRENpure})
together with Eq.~(\ref{eq: 3SCsat}), we have
\begin{align}
{\mathcal{N}_{sc}}\left(\ket{\phi_k}_{A_1|A_{j_1}|A_{j_2}}\right)=&{\mathcal{N}_{sc}}\left(\ket{\phi_k}_{A_1|A_{j_1}A_{j_2}}\right)\nonumber\\
&-{\mathcal{N}_{sc}}\left(\rho^k_{A_1 |A_{j_1}}\right)-{\mathcal{N}_{sc}}\left(\rho^k_{A_1|A_{j_2}}\right)\nonumber\\
=&0,
\label{3SCREN0}
\end{align}
for each three-qudit pure state $\ket{\phi_k}_{A_1|A_{j_1}A_{j_2}}$ in Eq.~(\ref{3SCRENopt}),
and thus we have
\begin{align}
{\mathcal{N}_{sc}}&\left(\rho_{A_1|A_{j_1}|A_{j_2}}\right)=0
\label{3SCRENmix0}
\end{align}
for any three-qudit reduced density matrix $\rho_{A_1A_{j_1}A_{j_2}}$ of $\ket{\psi}_{A_1A_2\cdots A_n}$ in Eq.~(\ref{WV2}).

Now we assume the induction hypothesis; for any $(m-1)$-qudit reduced density matrix
$\rho_{A_1A_{j_1}A_{j_2}\cdots A_{j_{m-2}}}$ of the state $\ket{\psi}_{A_1A_2\cdots A_n}$ in Eq.~(\ref{WV2}), we assume
\begin{align}
{\mathcal{N}_{sc}}\left(\rho_{A_1|A_{j_1}|A_{j_2}|\cdots|A_{j_{m-2}}}\right)=0.
\label{induct}
\end{align}
For any index vector $\vec{j}=(j_1, j_2, \ldots, j_{m-1})$ with $\{j_1,~j_2, \ldots, j_{m-1}\}\subseteq\{2,3,\cdots,n\}$
and the $m$-qudit reduced density matrix $\rho_{A_1A_{j_1}\cdots A_{j_{m-1}}}$,
we consider an optimal pure-state decomposition
\begin{align}
\rho_{A_1A_{j_1}\cdots A_{j_{m-1}}}=\sum_{k}q_k\ket{\phi_k}_{A_1A_{j_1}\cdots A_{j_{m-1}}}\bra{\phi_k}
\label{optdecomp}
\end{align}
realizing $m$-SCREN of $\rho_{A_1A_{j_1}\cdots A_{j_{m-1}}}$, that is,
\begin{align}
{\mathcal{N}_{sc}}&\left(\rho_{A_1|A_{j_1}|\cdots |A_{j_{m-1}}}\right)\nonumber\\
&~~~~=\bigg[\sum_k q_k\sqrt{{\mathcal{N}_{sc}}\left(\ket{\phi_k}_{A_1|A_{j_1}|\cdots |A_{j_{m-1}}}\right)}\bigg]^2.
\label{mSCRENopt}
\end{align}

From the definition of pure-state SCREN in Eq.~(\ref{eq:nCRENpure}), the $m$-SCREN of each
$\ket{\phi_k}_{A_1A_{j_1}\cdots A_{j_{m-1}}}$ in Eq.~(\ref{mSCRENopt}) is
\begin{align}
{\mathcal{N}_{sc}}\left(\ket{\phi_k}_{A_1|A_{j_1}|\cdots|A_{j_{m-1}}}\right)
=&{\mathcal{N}_{sc}}\left(\ket{\phi_k}_{A_1|A_{j_1}\cdots A_{j_{m-1}}}\right)\nonumber\\
-\sum_{s=2}^{m-1} \sum_{\vec{i}^s}&{\mathcal{N}_{sc}}\left(\rho^k_{A_1|A_{i_1}|\cdots |A_{i_{s-1}}}\right)^{s/2},
\label{mSCRENpure1}
\end{align}
where $\rho^k_{A_1A_{i_1}\cdots A_{i_{s-1}}}$ is the reduced density matrix of
$\ket{\phi_k}_{A_1A_{j_1}\cdots A_{j_{m-1}}}$ on $s$-qudit subsystems ${A_1A_{i_1}\cdots A_{i_{s-1}}}$,
and the second summation is over all possible index vectors $\vec{i}^s=(i_1, i_2, \cdots, i_{s-1})$ with $\{i_1,~i_2, \cdots, i_{s-1}\} \subseteq \{j_1, j_2,\cdots,j_{m-1}\}$.
We further divide the last term of the right-hand side of Eq.~(\ref{mSCRENpure1}) into the summations of two-SCREN and the others;
\begin{align}
{\mathcal{N}_{sc}}\left(\ket{\phi_k}_{A_1|A_{j_1}|\cdots|A_{j_{m-1}}}\right)
=&{\mathcal{N}_{sc}}\left(\ket{\phi_k}_{A_1|A_{j_1}\cdots A_{j_{m-1}}}\right)\nonumber\\
&-\sum_{l=1}^{m-1}{\mathcal{N}_{sc}}\left(\rho^k_{A_1|A_{j_l}} \right)\nonumber\\
-\sum_{s=3}^{m-1} \sum_{\vec{i}^s}&{\mathcal{N}_{sc}}\left(\rho^k_{A_1|A_{i_1}|\cdots |A_{i_{s-1}}}\right)^{s/2}.
\label{mSCRENpure2}
\end{align}

For each $s=3, \cdots ,m-1$, $\rho^k_{A_1A_{i_1}\cdots A_{i_{s-1}}}$ in the last summation of Eq.~(\ref{mSCRENpure2})
is a $s$-qudit reduced density matrix of the $m$-qudit state $\ket{\phi_k}_{A_1A_{j_1}\cdots A_{j_{m-1}}}$ where Lemma~\ref{Lem: reduced}
implies that $\ket{\phi_k}_{A_1A_{j_1}\cdots A_{j_{m-1}}}$ in Eq.~(\ref{optdecomp}) is a superposition of a $m$-qudit W-class state and vacuum.
Thus the induction hypothesis assures that the $s$-SCREN of $\rho^k_{A_1A_{i_1}\cdots A_{i_{s-1}}}$ is zero;
\begin{equation}
{\mathcal{N}_{sc}}\left(\rho^k_{A_1|A_{i_1}|\cdots |A_{i_{s-1}}}\right)=0,
\label{sSCREN0}
\end{equation}
for each $s=3,\cdots ,m-1$ and the index vector $\vec{i}^s=(i_1, i_2, \cdots, i_{s-1})$.

Furthermore, Theorem~\ref{SCsat} implies that the CKW-type monogamy inequality in terms of one and two SCREN is saturated by
$\ket{\phi_k}_{A_1A_{j_1}\cdots A_{j_{m-1}}}$, that is,
\begin{align}
{\mathcal{N}_{sc}}\left(\ket{\phi_k}_{A_1|A_{j_1}\cdots A_{j_{m-1}}}\right)=\sum_{l=1}^{m-1}{\mathcal{N}_{sc}}\left(\rho^k_{A_1|A_{j_l}} \right),
\label{satphi}
\end{align}
for each $k$. From Eq.~(\ref{mSCRENpure2}) together with Eqs.~(\ref{sSCREN0}) and (\ref{satphi}), we have
\begin{align}
{\mathcal{N}_{sc}}\left(\ket{\phi_k}_{A_1|A_{j_1}|\cdots|A_{j_{m-1}}}\right)=0
\label{mSCRENphi0}
\end{align}
for each $\ket{\phi_k}_{A_1A_{j_1}\cdots A_{j_{m-1}}}$ that arises in the decomposition of $\rho_{A_1A_{j_1}\cdots A_{j_{m-1}}}$ in Eq.~(\ref{optdecomp}).
Thus Eqs.~(\ref{mSCRENopt}) and (\ref{mSCRENphi0}) lead us to
\begin{align}
{\mathcal{N}_{sc}}\left(\rho_{A_1|A_{j_1}|\cdots|A_{j_{m-1}}} \right)=0,
\label{SCRENrhom0}
\end{align}
for any the $m$-qudit reduced density matrix $\rho_{A_1A_{j_1}\cdots A_{j_{m-1}}}$ of $\ket{\psi}_{A_1 A_2 ... A_n}$ with $3\leq m \leq n-1$.
\end{proof}

\section{Conclusions}\label{Sec: Conclusion}

In this paper, we have proposed SCREN as a powerful candidate to characterize the strongly monogamous property of multi-qudit systems.
We have provided a SM inequality of multi-party entanglement in terms of SCREN, and shown that the tangle-based SM inequality of multi-qubit systems can be rephrased by SCREN. We have also shown that SCREN SM inequality is still true for the counterexamples of CKW inequality in higher-dimensional systems. We have further provided an analytical proof that SCREN SM inequality is saturated by a large class of multi-qudit states, a superposition of multi-qudit generalized W-class states and vacuums.
Thus SCREN is a good alternative of tangle in characterizing strong monogamy of multi-party entanglement without any known counterexample even in higher-dimensional systems.

Noting the importance of the study on multi-party quantum entanglement,
our result can provide a rich reference for future
work on the study of entanglement in complex quantum systems.

\section*{Acknowledgments}
This research was supported by Basic Science Research Program through the National Research Foundation of Korea(NRF)
funded by the Ministry of Education, Science and Technology(NRF-2014R1A1A2056678) and Fusion Technology R\&D Center of SK Telecom.

\end{document}